\documentclass[conference]{IEEEtran}
\IEEEoverridecommandlockouts

\usepackage{etoolbox}
\newtoggle{arxiv}
\toggletrue{arxiv} % arXiv-safe default; set \togglefalse{arxiv} for
\usepackage{cite}
\usepackage{amsmath,amssymb,amsthm,mathtools}
\usepackage[ruled,vlined,linesnumbered]{algorithm2e}
\usepackage{graphicx}
\usepackage{textcomp}
\usepackage{xcolor}
\usepackage[caption=false,font=footnotesize]{subfig}

\usepackage{enumitem}
\usepackage{bm}
\usepackage{booktabs}
\usepackage{dcolumn}
\usepackage[colorlinks=true,linkcolor=blue,citecolor=red,urlcolor=blue]{hyperref}
\usepackage{cleveref}

\crefname{appendix}{Appendix}{Appendices}
\Crefname{appendix}{Appendix}{Appendices}

\usepackage[acronym]{glossaries}
\glsdisablehyper

\newtheorem{proposition}{Proposition}

\newtheorem{remark}{Remark}

\newcommand{\bx}{\bm{x}}

\newcommand{\bz}{\bm{z}}
\newcommand{\bZ}{\bm{Z}}

\newcommand{\edges}{\mathcal{E}}

\newcommand{\subspace}{\mathcal{L}}

\newcommand{\spans}{\mathrm{span}}
\newcommand{\slacks}{s}
\newcommand{\slackprob}[1]{\tilde{\mu}_{\slacks_{#1}}}
\newcommand{\slackdens}[1]{\varphi_{\slacks_{#1}}}
\newcommand{\binprob}[1]{\mu_{\bZ}^{(#1)}}

\DeclareMathOperator*{\argmin}{arg\,min}

\newacronym{qpu}{QPU}{quantum processing unit}
\newacronym{cop}{CoP}{Coefficient of Performance}
\newacronym{sqa}{SQA}{simulated quantum annealer}
\newacronym{snis}{SNIS}{self-normalized importance sampling}
\newacronym{up}{UP}{unbalanced penalization}
\newacronym{method}{TAWP}{Topology-Aware Walsh--Fourier Penalization}
\newacronym{methodfull}{TAWP (full)}{Topology-Aware Walsh--Fourier Penalization}
\newacronym{methodtopo}{TAWP (topology)}{Topology-Aware Walsh--Fourier Penalization}
\newacronym{tts}{TTS}{time-to-solution}

\def\BibTeX{{\rm B\kern-.05em{\sc i\kern-.025em b}\kern-.08em
    T\kern-.1667em\lower.7ex\hbox{E}\kern-.125emX}}
\begin{document}

\title{
  Hardware-Aware QUBO Reformulation of Constrained Binary Optimization via the Walsh–Fourier Transform
}
\author{
\IEEEauthorblockN{
Loong Kuan Lee\textsuperscript{1},
Harsha Nagarajan\textsuperscript{2},\\
Thore Gerlach\textsuperscript{3},
Sascha M\"ucke\textsuperscript{4},
Ragavi Krishnamoorthy\textsuperscript{1},
Nico Piatkowski\textsuperscript{1}
}

\IEEEauthorblockA{
\textsuperscript{1}Fraunhofer IAIS, Germany, {\href{mailto:loong.kuan.lee@iais.fraunhofer.de}{loong.kuan.lee@iais.fraunhofer.de}}
}

\iftoggle{arxiv}{
    \IEEEauthorblockA{
      \textsuperscript{2}Applied Mathematics \& Plasma Physics (T-5), Los Alamos National Laboratory, USA, \href{mailto:harsha@lanl.gov}{harsha@lanl.gov}
    }
  }{
    \IEEEauthorblockA{
      \textsuperscript{2}Los Alamos National Laboratory, USA, \href{mailto:harsha@lanl.gov}{harsha@lanl.gov}
    }
  }

\iftoggle{arxiv}{
    \IEEEauthorblockA{
      \textsuperscript{3}European Space Agency (ESA), Advanced Concepts Team, Noordwijk, The Netherlands, \href{mailto:Thore.Gerlach@esa.int}{thore.gerlach@esa.int}
    }
  }{
    \IEEEauthorblockA{
      \textsuperscript{3}European Space Agency (ESA), The Netherlands, \href{mailto:Thore.Gerlach@esa.int}{thore.gerlach@esa.int}
    }
  }

\IEEEauthorblockA{
\textsuperscript{4}TU Dortmund University, Germany% \texttt{sascha.muecke@tu-dortmund.de}
}
}
% \author{\IEEEauthorblockN{1\textsuperscript{st} Loong Kuan Lee}
% \IEEEauthorblockA{\textit{dept. name of organization (of Aff.)} \\
% \textit{name of organization (of Aff.)}\\
% City, Country \\
% email address or ORCID}
% \and
% \IEEEauthorblockN{2\textsuperscript{nd} Given Name Surname}
% \IEEEauthorblockA{\textit{dept. name of organization (of Aff.)} \\
% \textit{name of organization (of Aff.)}\\
% City, Country \\
% email address or ORCID}
% \and
% \IEEEauthorblockN{3\textsuperscript{rd} Given Name Surname}
% \IEEEauthorblockA{\textit{dept. name of organization (of Aff.)} \\
% \textit{name of organization (of Aff.)}\\
% City, Country \\
% email address or ORCID}
% }

% NOTE: \makeatletter must wrap the \iftoggle, not sit inside its argument --
% etoolbox tokenizes the branch before \makeatletter would run, so an inner
% \def\@IEEEpubidpullup silently redefines \@ instead.
\makeatletter
\iftoggle{arxiv}{%
    % Keep the notice entirely in the bottom margin so the text block -- and
    % hence the page count -- is untouched: zero pullup, and \parbox[t] (whose
    % reference baseline is its FIRST line) grows the block downward from the
    % bottom of the text block rather than upward into the body.
    \def\@IEEEpubidpullup{0pt}%
    \IEEEpubid{\raisebox{-2.5\baselineskip}[0pt][0pt]{%
      \parbox[t]{\textwidth}{\footnotesize
      \textcopyright~2026 IEEE. To appear in \emph{IEEE QCE 2026}.
      Personal use of this material is permitted. Permission from IEEE must be
      obtained for all other uses, in any current or future media, including
      reprinting/republishing this material for advertising or promotional
      purposes, creating new collective works, for resale or redistribution to
      servers or lists, or reuse of any copyrighted component of this work in
      other works.\strut}}}%
    }{}
\makeatother
    
\maketitle

\begin{abstract}

  We present a novel slack-free, penalty-based framework for
  reformulating constrained binary optimization as Quadratic
  Unconstrained Binary Optimization (QUBO) on near-term quantum
  annealing hardware. Given a user-chosen penalty function that most
  naturally captures a constraint---typically non-quadratic, such as a
  Heaviside-function surrogate---and a target probability measure over
  the Boolean hypercube, our method returns the weighted least-squares
  projection of the chosen penalty function onto the subspace spanned by
  linear and quadratic Walsh--Fourier characters that correspond to
  physically realizable couplings on the target hardware graph. Within
  this restricted family, the resulting quadratic surrogate is
  \emph{uniquely} and \emph{optimally} determined by the normal
  equations: unlike state-of-the-art approaches, it introduces no
  per-constraint penalty coefficients to tune and avoids dense all-pairs
  couplings by construction. Two practical consequences follow. First,
  the projected penalty respects device connectivity, reducing chain
  lengths and physical-qubit overhead after minor embedding. Second, we
  show empirically that this hardware-native surrogate can outperform
  denser full-pairwise projections, despite being drawn from a strictly
  smaller approximation space. This advantage widens once the QUBO is
  embedded and sampled on quantum annealers, yielding samples with the
  lowest worst-case and mean objective gaps compared to unbalanced
  penalization and a hardware-blind projection onto all quadratic
  terms.
  
  % Therefore, our approach sits in between penalty methods---such as
  % unbalanced penalization---that requires quadratic terms between each
  % logical qubit; and approaches that restrict themselves to just
  % linear penalty functions.

  % with inequality constraints on quantum annealers without
  % introducing additional slack variables. The core idea is to start from
  % an ``ideal'' (generally non-quadratic) penalty for each violated
  % inequality, expand that penalty in the Walsh--Hadamard/Fourier basis
  % over $\{-1,1\}^n$, and then project it onto the lower-dimensional
  % function space that is physically realizable on a target annealer
  % graph. This yields a connectivity-aware quadratic penalty with a
  % closed-form expression for its optimal coefficients under mean-squared
  % error. We provide the projection theorem and proof, practical
  % constructions for step, sigmoid, ReLU, hinge, and cubic penalties, and
  % an implementation outline suitable for readers with a background in
  % quantum optimization.
\end{abstract}

\begin{IEEEkeywords}
  Binary Optimization, Inequality Constraints, Quantum Annealing, QUBO, Walsh--Fourier Transform.
\end{IEEEkeywords}

\section{Introduction}
\label{sec:intro}

Constrained binary optimization underlies a wide range of industrial
decision problems, including unit commitment in power
systems~\cite{barrass2025leveraging,paterakis2023hybrid}, portfolio
construction under budget and cardinality
constraints~\cite{phillipson2021portfolio}, and vehicle routing, traffic
flow, and multi-agent
pathfinding~\cite{neukart2017traffic,gerlach2025hybrid}. Lucas's survey
catalogs Ising formulations for all of Karp's 21 NP-complete
problems~\cite{lucas2014ising}, and subsequent work has continued to
expand this collection~\cite{glover2022quantum}. A common template is
the binary linear program (BLP) with inequality
constraints~\cite{wolsey2020integer,woeginger1992subset},
\begin{equation}\label{eq:blp}
  \min_{\bx\in\{0,1\}^n}\; c^\top\bx
  \quad\text{s.t.}\quad A\bx\geqslant b,
\end{equation}
together with quadratic-objective generalizations. In all such settings,
the inequalities encode hard feasibility constraints rather than soft requirements.

Quantum annealing is closely related to adiabatic quantum
computing, in which a system is evolved from an easily prepared
initial Hamiltonian toward a problem Hamiltonian whose low-energy
states encode candidate solutions~\cite{kadowaki1998quantum}.
Commercial quantum annealers realize this paradigm for programmable
Ising models, while gate-model approaches such as the quantum
approximate optimization algorithm (QAOA)~\cite{farhi2014qaoa,hadfield2019quantum}
provide a digital counterpart. On current NISQ devices, however,
performance is shaped by noise, limited control precision, and sparse
hardware connectivity. This challenge is especially acute for hard
inequality constraints: while objectives are often readily expressed
in binary form, enforcing feasibility within a sparse two-body
Hamiltonian is not. Consequently, the QUBO formulation is central, as
dense couplings and excessive auxiliary variables can increase embedding
overhead and degrade solution quality~\cite{coffrin2019evaluating}.

The standard starting point is the slack-variable construction: a
nonnegative slack \(s_j \geqslant 0\) rewrites
\(a_j^\top\bx \geqslant b_j\) as \(a_j^\top\bx - s_j = b_j\), enforced
by the quadratic penalty
\(\lambda_j(a_j^\top\bx - s_j - b_j)^2\)~\cite{glover2022quantum}.
This is a penalty relaxation rather than an exact reformulation. Only
a sufficiently large \(\lambda_j\) guarantees that the ground state
recovers the feasible optimum. On near-term hardware it incurs three
familiar costs: slack variables must be binarized, adding auxiliary
qubits per constraint~\cite{montanez-barrera2024}; the squared-slack
term induces a dense quadratic block that requires long ferromagnetic
chains under embedding, increasing qubit usage and chain-break
risk~\cite{zbinden2020embedding}; and the added slack bits enlarge the
search space and can introduce spurious suboptimal
minima~\cite{montanez2023improving}. Sparsifying the resulting QUBO
may further increase the lifted dimension~\cite{suda2026}.

These drawbacks motivate several alternative lines of work. At the
circuit level, QAOA-based approaches encode constraints natively via
feasible-subspace mixers~\cite{bucher2025penalty}, tailored driver
Hamiltonians~\cite{bucher2025efficient}, or problem-specific state
preparation~\cite{christiansen2025quantum}, but they belong to a
different algorithmic regime and inherit the depth and control
limitations of gate-model implementations.

For annealing and hybrid workflows, a prominent alternative eliminates
slack variables through iterative Lagrangian updates---including
ADMM-type schemes~\cite{yonaga2020solving,mucke2023efficient},
augmented-Lagrangian methods~\cite{djidjev2023quantum,sharma2025}, and
subgradient procedures~\cite{takabayashi2025a}---at the cost of outer
iterations that repeatedly retune multipliers and re-solve the
underlying QUBO. Decomposition avoids monolithic QUBOs altogether,
handling feasibility in a column-generation master
problem~\cite{gerlach2025hybrid} or generating improving variables via
QUBO-based pricing subproblems~\cite{takabayashi2025}, but shifts the
difficulty to coordinating repeated quantum subroutine calls.

Closest in spirit to our setting are slack-free penalty methods, which
replace exact encodings by directly designed quadratic
surrogates---most prominently \emph{unbalanced
penalization}~\cite{montanez-barrera2024,lee2025a,montanez2023improving},
which penalizes constraint violation more strongly than it rewards
satisfaction via per-constraint parameters
\((\lambda_{1,j},\lambda_{2,j})\), without auxiliary variables.
However, two concerns remain: the per-constraint parameters lack a
closed-form characterization and are typically tuned heuristically,
e.g.\ by Nelder--Mead, and quadratic terms still couple all
variables in a constraint's support, so the resulting QUBO remains
dense and can embed poorly on sparse hardware graphs.

These limitations motivate the central problem addressed in this paper:
\textit{For a given constraint penalty, how can one construct a
  quadratic surrogate that best approximates that penalty while using
  only couplings natively supported by target hardware?} A
solution to this problem would simultaneously eliminate per-constraint
penalty tuning and avoid the embedding overhead associated with dense
penalty constructions.

Our Walsh--Fourier-based construction also belongs to the slack-free
family, but differs in how the quadratic penalty is obtained. Rather
than positing a manually tuned quadratic ansatz, we compute the
orthogonal projection of a user-chosen penalty function \(\psi\) in a
weighted \(L_2\) space over the Boolean cube onto the: constant,
one-body, and admissible two-body terms. If all pairwise terms are
admissible, this gives the best quadratic surrogate in the chosen
weighted least-squares sense; if the admissible two-body terms are
restricted to an edge set \(\edges\), it gives the best
hardware-admissible surrogate for the corresponding hardware graph.

Our \textit{main contribution} is a projection-based framework for constructing
\acrfull{method}: hardware-aware quadratic surrogates derived
from a user-chosen penalty by a single weighted least-squares
projection. In particular, the framework provides:
\begin{enumerate}
  \item \textbf{Penalty fit:} a principled quadratic approximation of a
  chosen, possibly non-quadratic, penalty function; and
  \item \textbf{Topology fit:} the corresponding optimal approximation
  when the admissible two-body terms are restricted to a prescribed
  coupling set \(\edges\), such as the native couplers induced by a
  placement on a target hardware graph.
\end{enumerate}
In summary, the framework decouples the modeling choice of penalty from
the hardware-imposed choice of topology: the user specifies what to
penalize, the device specifies which couplings are admissible, and a
single weighted least-squares system produces the best surrogate
consistent with both. When \(\edges\) is induced by a native placement,
the resulting QUBO is chain-free by construction.
\Cref{tab:sharma-lau-constraint-encoding} summarizes how this approach
compares with the main alternative paradigms.

Furthermore, in \Cref{sec:benchmark} we evaluate our framework on real
quantum hardware using a standard set of multidimensional knapsack
benchmark instances. For most instances we observe two consistent
benefits: (1) the fully quadratic approximation produces the largest
number of feasible samples from the \acrfull{qpu} and yields the
solution with the smallest objective gap; and (2) the topology-aware
projection has the lowest maximum and mean objective gaps over its
samples from the quantum annealer. Together, these results indicate that
our projection-based framework is a promising approach for encoding
constraint penalties when formulating QUBO models.

Note that although we present and evaluate the framework for inequality
constraints---the case in which a slack-free QUBO reformulation is most
challenging---the projection itself applies to any pseudo-Boolean
function; \Cref{rem:scope} makes the scope precise, covering equality
constraints and, more generally, the direct projection of (higher-order)
objectives.

The next section presents the necessary mathematical background, followed
by the main results in \Cref{sec:method}.

\begin{table}[t]
  \caption{Comparison of constraint encoding methods~\cite{sharma2025}.}
  \label{tab:sharma-lau-constraint-encoding}
  \centering
  \resizebox{\columnwidth}{!}{%
  \begin{tabular}{@{}lllll@{}}
    \toprule
    Method & Qubit Overhead & Feasibility Guarantee & Computational Cost & Scalability \\
    \midrule
    Slack Variables & High & Guaranteed & Moderate & Low \\
    Penalty Methods & None & Not guaranteed & Low (single shot) & Moderate \\
    Lagrangian Relaxation & None & Strong (iterative) & High (multiple runs) & High \\
    Proposed Method & None & Not guaranteed & Low (single shot) & High \\
    \bottomrule
  \end{tabular}%
  }
\end{table}

\section{Quantum Annealing and Mathematical Preliminaries}\label{sec:preliminaries}
\subsection{Quantum Annealing and Hardware Constraints}
\label{sec:back-qa}
In standard quantum annealing, the system Hamiltonian interpolates
between a transverse-field driver and a programmable Ising problem
Hamiltonian, \(H(s)=A(s)\,H_{\mathrm D}+B(s)\,H_{\mathrm P}\) for
\(s\in[0,1]\), where \(H_{\mathrm D}=-\sum_i \sigma_i^x\) and
\(H_{\mathrm P}=\sum_i h_i \sigma_i^z +\sum_{(i,j)\in\edges_{\mathrm H}}
J_{ij}\,\sigma_i^z\sigma_j^z\) is the Ising form of a QUBO under the
usual binary-to-spin transformation. In the ideal closed-system
adiabatic limit, and under suitable gap conditions, sufficiently slow
evolution keeps the state near the instantaneous ground state; on
present-day open-system, finite-temperature annealers, the device
instead returns samples biased toward low-energy configurations of
\(H_{\mathrm P}\) rather than a guaranteed ground state.

The relevant hardware constraint is the sparsity of the user-accessible
coupler graph \(G_{\mathrm H}=(V_{\mathrm H},\edges_{\mathrm H})\).
D-Wave's 2000Q, Advantage, and Advantage2 systems use the Chimera,
Pegasus, and Zephyr topologies, respectively, with successively higher
per-qubit coupler degree~\cite{boothby2020pegasus}. Any logical
Ising/QUBO interaction graph not contained in \(G_{\mathrm H}\) must be
minor-embedded~\cite{cai2014practical}, representing each logical
variable by a ferromagnetically coupled chain of physical qubits.  This
increases physical-qubit overhead, forces a chain-strength trade-off
between chain integrity and distortion of the logical problem, and
introduces broken-chain events that must be resolved during unembedding.

Throughout the paper, we write
\(\edges=\bigl\{\{i,j\}\subseteq V_{\mathrm L}:
(\pi(i),\pi(j))\in\edges_{\mathrm H}\bigr\}\) for the set of
logical couplings available under a chosen injective placement
\(\pi: V_{\mathrm L}\hookrightarrow V_{\mathrm H}\). Any logical
QUBO whose interaction graph is contained in \(\edges\) admits a
native, chain-free placement under \(\pi\), avoiding the qubit
overhead, coupling-range trade-offs, and chain-break postprocessing
of minor embedding.

\subsection{Pseudo-Boolean Functions and Walsh--Fourier Transform}
A pseudo-Boolean function is a real-valued function on the binary
hypercube, \(\psi : \{0,1\}^n \to \mathbb{R}\). We equivalently view the
same function on the spin cube \(\{-1,1\}^n\) via the standard
binary-to-spin bijection, \(\bz_i = 2\bx_i - 1\) and
\(\bx_i = \tfrac{1}{2}(1 + \bz_i)\). This bijection preserves
minimizers under the corresponding change of variables and, for
quadratic functions, yields the usual equivalence between QUBO
coefficients in \(\bx\) and Ising coefficients in \(\bz\), up to
additive constants. Under this equivalence, both the objective and
penalty functions considered in this paper are pseudo-Boolean.

%\subsection{The Walsh--Fourier Transform}
Every pseudo-Boolean function
\(\psi:\{-1,1\}^n \to\mathbb{R}\) admits a unique multilinear expansion
\begin{equation}\label{eq:fourier}
  \psi(\bz) \;=\; \sum_{S\subseteq[n]} \bm{\theta}_S \prod_{i\in S} \bz_i
  \;=\; \sum_{S\subseteq[n]} \bm{\theta}_S\,\chi_S(\bz),
\end{equation}
with real coefficients \(\bm{\theta}_S\in\mathbb{R}\), where
\(\chi_S(\bz)=\prod_{i\in S} z_i\) are the
\emph{Walsh characters}~\cite{walsh1923}. Given a probability measure
\(\mu\) on \(\{-1,1\}^n\), define
\begin{equation}\label{eq:inner-prod}
  \langle \psi, \phi \rangle_\mu
  \;\coloneq\; \mathbb{E}_{\bz\sim\mu}
  \bigl[\psi(\bz)\,\phi(\bz)\bigr].
\end{equation}
When \(\mu\) has full support, this defines an inner product on the
finite-dimensional space of real-valued functions on the cube, which we
denote by \(L_2(\mu)\). Under the uniform measure
\(\mu_{\mathrm U}(\bz)=2^{-n}\), the Walsh characters form an
orthonormal basis of
\(L_2(\mu_{\mathrm U})\)~\cite{morgenthaler1957,odonnell2014}. In this
case, the expansion in~\eqref{eq:fourier} is an orthonormal change of
basis, and the coefficients are given by
$\bm{\theta}_S = \langle \psi, \chi_S \rangle_{\mu_{\mathrm U}}$.

For a family of subsets \(\Phi\subseteq 2^{[n]}\), let
\(
  \subspace
  = \spans\{\chi_S:S\in\Phi\}
  \subseteq L_2(\mu_{\mathrm U}).
\)
The orthogonal projection of \(\psi\) onto \(\subspace\) is obtained
by truncating the Walsh--Fourier expansion to the indices in \(\Phi\),
\begin{equation}\label{eq:ortho-proj}
  \hat\psi_\subspace(\bz)
  \;=\; \sum_{S\in\Phi} \bm{\theta}_S\,\chi_S(\bz).
\end{equation}
By the Hilbert projection theorem~\cite{rudin1991,kreyszig1978functional},
this projection is the best approximation to \(\psi\) in \(\subspace\),
\begin{equation}\label{eq:ortho-proj-opt}
  \hat\psi_\subspace
  =
  \argmin_{\phi\in\subspace}
  \|\psi-\phi\|^2_{\mu_{\mathrm U}}.
\end{equation}
The truncation in~\eqref{eq:ortho-proj} relies on orthonormality
under the uniform measure.

\subsection{Limitations of the Uniform Measure}\label{sec:uniform-bad}
\begin{figure}
  \centering
  \includegraphics[width=1.0\linewidth]{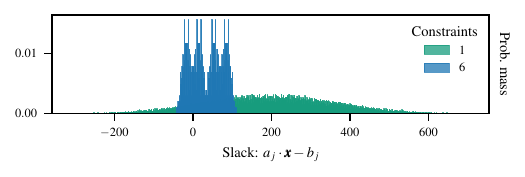}
  \caption{Distribution of slack values
    \(s_j(\bx)=a_{j}^\top\bx - b_{j}\) under uniform measure on
    \(\{0,1\}^n\), for two constraints of the SAC-94 \texttt{pet3}
    MDKP instance~\cite{drake2015}.}
  \vspace{-0.2cm}
  \label{fig:slacks}
\end{figure}

Generally, for any penalty \(\psi\) that depends on \(\bx\) only through
the slack \(s_j(\bx)=a_j^\top\bx-b_j\) and any full-support measure
\(\mu\), the squared error of a surrogate \(\phi\) splits across the
attainable slack values, writing $\bm{s}_{j}\coloneq s_{j}(\bx)$ via abuse
of notation,
\begin{gather*}
  \|\psi-\phi\|^2_{\mu}
  = \sum_{s}
  \mathbb{P}_{\mu}\bigl[\bm{s}_j=s\bigr]
  \mathbb{E}_{\mu}\bigl[\bigl(\psi(\bx)-\phi(\bx)\bigr)^2
  \,\big|\, \bm{s}_j=s\bigr].
\end{gather*}
Choosing \(\mu\) is thus essentially choosing an \emph{error budget},
where high-mass slack values are fit accurately and vice versa.
However, the uniform measure offers no such choice as its weights
\(\mathbb{P}_{\mu_{\mathrm U}}[\bm{s}_j=s]\) are the instance's slack
histogram, which concentrates wherever the constraint dictates and
varies widely between constraints; see \Cref{fig:slacks}.

These limitations motivate replacing the uniform measure with a
full-support alternative $\mu$ that concentrates its mass on the slack
regions that matter; \Cref{sec:method-config} makes this concrete with a
Gaussian target centered at mid-feasible slack. However, under $\mu$, the
Walsh characters are generally not orthogonal in
\(L_2(\mu)\). Therefore, the projection coefficients no longer equal the
truncated uniform Fourier coefficients and must instead be obtained from
the weighted least-squares normal equations which we will develop next.

\begin{figure*}[t]
  \centering
  \includegraphics[width=\textwidth]{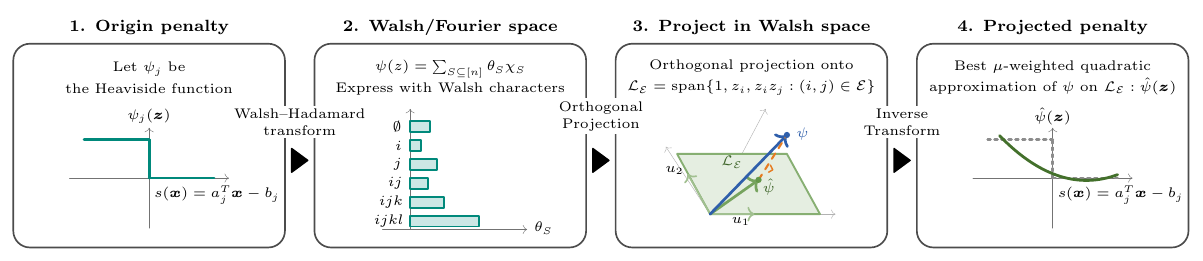}
  \caption{Connectivity-aware penalty approximation: origin penalty
    $\psi$ is expressed in the Walsh basis, orthogonally
    projected onto the hardware-admissible subspace
    $\mathcal{L}_{\edges}=\mathrm{span}\{1,z_i,z_i
    z_j:\{i,j\}\in\edges\}$, and synthesized back as
    $\hat{\psi}_{\edges}$---the best $\mu$-weighted approximation of
    $\psi$ on the couplings allowed by $\edges$.}
  \label{fig:gram-schmidt-projection}
\end{figure*}

\section{Topology-Aware Walsh--Fourier Penalization}
\label{sec:method}
Throughout this section we are interested in a constrained binary
optimization problem with the general form,
\begin{equation}
  \label{eq:opt-prob}
  \min_{\bx\in\{0,1\}^n}\; f(\bx)
  \quad\text{s.t.}\quad A\bx\geqslant b.
\end{equation}
Passing to spin variables via $\bz_i=2\bx_i-1$, we represent the penalty in
the spin domain as $\psi : \{-1,1\}^n \to\mathbb{R}$ and write the
penalized problem as
\begin{equation}
  \label{eq:min-prob}
  \min_{\bz\in\{-1,1\}^n}\; f(\bz) + \lambda \sum_{j=1}^{m}\psi_{j}(\bz; A, b),
\end{equation}
where the multiplier $\lambda$ controls the strength of the penalty
term.

\subsection{Summary of Main Result}\label{sec:main-res}
Concretely, our framework requires three inputs: a user-chosen
``origin'' penalty function $\psi:\{-1,1\}^{n}\to\mathbb{R}$ for one
inequality constraint, a probability measure $\mu_{\bZ}$ with full
support over $\bz\in\{-1,1\}^{n}$---notated as $\mu$ for brevity, and a
set $\edges$ of admissible logical couplings induced by the target
\acrshort{qpu} placement. Define
\begin{gather}
  \Phi \coloneq \{\emptyset\}
  \cup \bigl\{\{i\}: i\in[n]\bigr\}
  \cup \edges,\\
  \label{eq:subspace}
  \subspace_{\edges}
  \coloneq \spans\{\chi_S : S\in\Phi\} \subseteq L_2(\mu).
\end{gather}
Thus $\subspace_{\edges}$ contains the constant term, all linear Walsh
characters, and only those quadratic Walsh characters whose couplings
are allowed by \(\edges\).

\Cref{fig:gram-schmidt-projection} then summarizes the core idea of our
framework geometrically. The origin penalty $\psi$ is first expressed in
terms of the Walsh characters $\{\chi_S : S \subseteq [n]\}$. Then
projecting $\psi$ onto the new subspace $\subspace_{\edges}$, under the
$\mu$-weighted inner product, results in $\hat{\psi}_{\edges}$, the best
approximation of $\psi$ using only Walsh characters supported by the
target topology. The coefficients of this projected penalty can be read
directly as the corresponding QUBO/Ising penalty terms,
\begin{equation}
  \label{eq:pen-approx}
  \hat{\psi}_{\edges}(\bz)
  \coloneq \sum_{S\in\Phi} \bm{\hat{\theta}}_{S}\chi_{S}(\bz).
\end{equation}
Writing $\Phi=\{S_1,\ldots,S_K\}$, the Walsh-coordinate vector
$\bm{\hat{\theta}}\in\mathbb{R}^K$ satisfies the normal equations
\begin{equation}
  \label{eq:normal}
  G\bm{\hat{\theta}}=\bm{c},
\end{equation}
where
$G_{i,j}= \langle \chi_{S_i},\chi_{S_j}\rangle_{\mu}$ and
$\bm{c}_{i}= \langle \psi,\chi_{S_i}\rangle_{\mu}$.

In practice, the expectations defining \(G\) and \(\bm{c}\)
are estimated by \acrfull{snis}, after which we solve the regularized
system,
\[
  (\hat{G}+\epsilon I)\bm{\hat{\theta}}=\bm{\hat{c}},
\]
where $\epsilon$ is a small regularization parameter. An algorithmic
summary of this practical procedure is given in \Cref{alg:wls}.

\Cref{subsec:sampled-projection} gives the sampling-based
implementation, \Cref{subsec:target-measure} explains how we construct
the nonuniform target measure, and \Cref{sec:method-config} specifies
the fixed projection design used in the experiments below.

\begin{remark}[Scope of applicability]\label{rem:scope}
The construction above is not limited to inequality constraints. The projection applies to an arbitrary pseudo-Boolean function $\psi$ and returns its best approximation in $\subspace_{\edges}$. Equality constraints $a_j^\top\bx=b_j$ can be handled using the two-sided penalty $(a_j^\top\bx-b_j)^2$, together with a target measure concentrated near $s_j=0$ while retaining full support, as discussed in \Cref{sec:uniform-bad}, particularly when $\edges$ is sparse.
  % They are, however, the easier case. The penalty is already
  % exactly quadratic, so only the topology fit is nontrivial, whereas
  % inequalities require both the penalty fit and the topology fit.
  % The normal equations in \eqref{eq:normal} could likewise project the
  % objective itself---sparsifying hardware-incompatible couplings or
  % reducing higher-order objectives to quadratic ones. We restrict our
  % claims to constraint penalties nonetheless, since sparsifying an
  % objective function could change the true minimizer. We therefore
  % leave minimizer-preserving objective projection to future work.
\end{remark}

\subsection{Exact Projection in Walsh--Fourier Coordinates}\label{subsec:exact-projection}
In order to show why the coefficients $\bm{\hat{\theta}}$ satisfy the
normal equations as defined in \eqref{eq:normal}, we first characterize
the best approximation of $\psi$ in the subspace $\subspace_{\edges}$ as
the surrogate penalty function that minimizes the squared distance to
$\psi$:
% Fortunately, Lax et al. \cite{lax2005} showed that for a non-uniform
% measure $\mu$\footnote{We assume $\mu$ has the same support as a uniform
% distribution.}, the best approximation of a function
% $\psi\in\fullspace(\mu)$ in the subspace
% $\subspace\subseteq\fullspace(\mu)$ is just the unique function,
% \begin{equation}
%   \hat{\psi}_{\subspace} = \sum_{j=1}^{M}\inner{\psi, u_{j}}_{\mu}u_{j}
%   \label{eq:pen-proj}
% \end{equation}
% where $\bm{u}\coloneq u_{1},\ldots,u_{M}$ is an orthonormal basis for
% $\subspace$ obtained by applying Gram--Schmidt on the Walsh characters
% that forms a basis for $\subspace$,
% \begin{equation}
%   \mathcal{L}=
%   \mathrm{span}\{\chi_{S_1},\ldots,\chi_{S_K}\}\subseteq L_2(\mu).
% \end{equation}

% However, recall from \Cref{sec:contrib} that that our main goal is to
% obtain an approximation of $\psi$ as a linear combination of Walsh
% characters,
% \begin{gather}
%   \hat{\psi}_{\subspace}(\bz) =
%   \sum_{i=1}^{K} \hat{\theta}_{i} \chi_{\Phi_{i}}(\bz),
%   \label{eq:pen-approx}
%   \intertext{where}
%   K \coloneq |\Phi| \qquad
%   \Phi \coloneq \{\emptyset\} \cup [n] \cup \edges
% \end{gather}
% so that we can construct a QUBO penalty matrix directly from
% $\{\hat{\theta}_{i}\}_{i=1}^{K}$. Therefore the approximation in
% \Cref{eq:pen-proj} with respect to the orthonormal basis under $\mu$ is
% insufficient.

% Therefore instead of using Gram--Schmidt, we will use a weighted
% least-squares approach. Recall that $\hat{\psi}_{\subspace}\in\subspace$
% is the ``best approximation'' of $\psi$ if and only if,
\begin{equation}
  \label{eq:best-approx}
  \hat{\psi}_{\edges}
  = \argmin_{\phi\in\subspace_{\edges}} \|\psi - \phi\|_{\mu}^{2}.
\end{equation}
Writing $\Phi=\{S_1,\ldots,S_K\}$ and
\(\phi(\bz)=\sum_{i=1}^{K}\bm{\hat{\theta}}_{i}\chi_{S_i}(\bz)\), we have
\begin{align*}
  J(\bm{\hat{\theta}})
  &= \mathbb{E}_{\bz\sim\mu}\bigl[(\psi - \phi)^{2} \bigr]
  =\|\psi\|_{\mu}^{2} - 2\bm{\hat{\theta}}^{\top}\bm{c}
    + \bm{\hat{\theta}}^{\top} G \bm{\hat{\theta}},
\end{align*}
where $G_{ij}=\langle \chi_{S_i},\chi_{S_j}\rangle_{\mu}$ and
$\bm{c}_i=\langle \psi,\chi_{S_i}\rangle_{\mu}$.
Taking the gradient of $J(\bm{\hat{\theta}})$, we obtain
\begin{align*}
  \nabla_{\bm{\hat{\theta}}} J(\bm{\hat{\theta}})
  =-2\bm{c} + (G + G^{\top})\bm{\hat{\theta}}
  =2G\bm{\hat{\theta}} - 2\bm{c},
\end{align*}
since $G = G^{\top}$ is symmetric. Setting this gradient to zero yields
the ``normal equations'' in \eqref{eq:normal}, and hence any coefficient
vector $\bm{\hat{\theta}}$ that satisfies \eqref{eq:normal} is a
stationary point of $J$. Moreover, $G$ is the Gram matrix of the family
$\{\chi_{S_i}\}_{i=1}^{K}$, so $G\succeq0$. Because $\mu$ has full
support on $\{-1,1\}^{n}$ and the Walsh characters are linearly
independent, this Gram matrix is in fact positive definite,
$G\succ0$. Therefore the Hessian $2G$ is positive definite, $J$ is
strictly convex, and the solution of \eqref{eq:normal} is the unique
global minimizer~\cite{kreyszig1978functional,boyd2004convex}.

The coefficients $\{\bm{\hat{\theta}}_{i}\}_{i=1}^{K}$ in
\eqref{eq:pen-approx} can therefore be obtained simply by solving the
linear system in \eqref{eq:normal}. In particular, if
\(\psi\in\subspace_{\edges}\), the unique minimizer of
\eqref{eq:best-approx} is \(\psi\) itself; in other words, the exact
projection is lossless on penalties already representable within the
admissible coupling set.

\subsection{Practical Sampling-Based Implementation}\label{subsec:sampled-projection}
\begin{algorithm}[t]
  \caption{\acrshort{method} via weighted least-squares}\label{alg:wls}
% \KwIn{Origin penalty in spin variables--$\psi$;
%   Target probability measure--$\mu_{\bZ}$;
%   Proposal probability measure--$\nu_{\bZ}$;
%   Sample size--$N$;
%   Index tuples of coupled variables--$\edges$;
%   Small regularization term--$\lambda$
% }
\KwIn{$\psi$, $\mu_{\bZ}$, $\nu$, $N$, $\edges$, $\epsilon$}
\KwOut{$\hat{\psi}_{\edges}$}
\KwData{Number of variables--$n$}
$\Phi \gets \{\emptyset\}
\cup \bigl\{\{i\}: i\in[n]\bigr\}
\cup\edges$\;
% $\bm{g}_{\Phi}(\bz)=
% \bigl[\prod_{i\in S}\bz_{i} : S \in \Phi\bigr]$\;
% $\hat{G} \gets $ zero $|\Phi|\times |\Phi|$ matrix\;
% $\bm{\hat{c}} \gets $ zero $|\Phi|$-length vector\;
% $W\gets 0$\;
\For{$k \in [N]$}{
  $\bz^{(k)}\sim\nu$\;
  $w_{k}\gets \frac{\mu_{\bZ}(\bz^{(k)})}{\nu(\bz^{(k)})}$\;
  % $W\gets W+w_k$\;
  % $\hat{G} \gets \hat{G} + w_{k}
  % \bm{g}_{\Phi}(\bz^{(k)})\bm{g}_{\Phi}(\bz^{(k)})^{\top}$\;
  % $\bm{\hat{c}} \gets \bm{\hat{c}} + w_{k}
  % \psi(\bz^{(k)})\bm{g}_{\Phi}(\bz^{(k)})$\;
}
% $\hat{G}\gets \hat{G}/W$\;
% $\bm{\hat{c}}\gets \bm{\hat{c}}/W$\;
$\hat{G},\bm{\hat{c}} \gets $ estimates in \eqref{eq:est-G} and
\eqref{eq:est-c}, respectively\;
$\bm{\hat{\theta}} \gets $ solution to linear system $(\hat{G} + \epsilon
I)\bm{\hat{\theta}} = \bm{\hat{c}}$\;
\Return{$\hat{\psi}_{\edges}(\bz)
  \coloneq\sum_{S\in\Phi} \bm{\hat{\theta}}_{S}\prod_{i\in S}\bz_{i}$}
\end{algorithm}

However, there are two main barriers to a practical implementation of
this approach:
\begin{enumerate}
\item since the size of the domain $\{-1,1\}^{n}$ is exponential in $n$, we can
  realistically only estimate $G$ and $\bm{c}$ from samples of
  $\bz\sim\mu$, and
\item in order to avoid the issue described in \Cref{sec:uniform-bad},
  we have to define a probability measure over the slack domain first,
  namely $\slackprob{j}$, and then construct the corresponding
  spin-domain measure $\binprob{j}$.
\end{enumerate}
To address both issues, we draw samples
$\bz^{(k)}\sim \nu$, $k=1,2,\ldots,N$, from a simple proposal
distribution $\nu$. Then, using \acrshort{snis}, we estimate $G$ and
$\bm{c}$ by
\begin{gather}
  \label{eq:est-G}
  \hat{G} \coloneq \frac{1}{\sum_{k}w_{k}}
  \sum_{k=1}^{N} w_{k} \bm{g}_{\Phi}(\bz^{(k)})
  \bm{g}_{\Phi}(\bz^{(k)})^{\top},\\
  \label{eq:est-c}
  \bm{\hat{c}} \coloneq \frac{1}{\sum_{k}w_{k}}
  \sum_{k=1}^{N} w_{k} \psi(\bz^{(k)}) \bm{g}_{\Phi}(\bz^{(k)}),
\end{gather}
where $w_{k} \coloneq \mu_{\bZ}(\bz^{(k)})/\nu(\bz^{(k)})$ and
$\bm{g}_{\Phi}(\bz) = \bigl[\chi_{S}(\bz) : S \in \Phi \bigr]^{\top}$.
The coefficients in \eqref{eq:pen-approx} are then obtained by solving
the linear system
\begin{equation}
  (\hat G+\epsilon I)\bm{\hat{\theta}}=\bm{\hat{c}},
\end{equation}
with optional small $\epsilon\geqslant0$ for numerical
stability. \Cref{alg:wls} summarizes this entire procedure. The
worst-case time complexity of this approach is then
$\mathcal{O}(N |\Phi|^{2} + |\Phi|^{3})$. A more detailed description
and proof can be found in \Cref{prop:complexity} in \Cref{apx:proofs}. Note
that since the topologies of D-Wave \acrshortpl{qpu} have constant
degree, the number of edges, and therefore $|\Phi|$, scales linearly with
the number of variables.
The remaining question is how to compute $\binprob{j}(\bz)$ when we
only know $\slackprob{j}$; equivalently, how do we transform
$\slackprob{j}$ into $\binprob{j}$?

\subsection{Transforming a Slack-Domain Measure to a Spin-Domain Measure via the Lugannani--Rice Formula}\label{subsec:target-measure}
\begin{figure}
  \centering
  \includegraphics[width=1.0\linewidth]{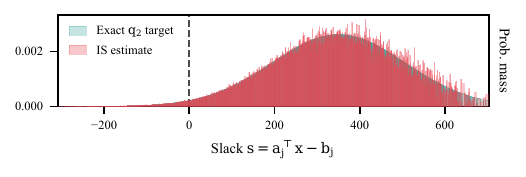}
  \caption{Exact target bin masses $u_l$ vs. their
    importance-sampling estimates for constraint $j=1$ of \texttt{pet3};
    estimates track the exact masses closely.}
  \label{fig:slack-dist}
\end{figure}

To construct this measure, assume that the discrete probability measure
$\slackprob{j}$ is obtained by discretizing a density function
$\slackdens{j}$ on the slack domain. For instance, $\slackdens{j}$
could be the Gaussian target defined in \eqref{eq:gaussian-target}.
Partition the slack axis into bins $B_l=[\ell_l^-,\ell_l^+)$ of width
$\Delta$, define the spin-domain slack by
\(s_j(\bz)\coloneq a_j^\top(1+\bz)/2-b_j\), and let $l(\bz)$ be the
index such that $s_j(\bz)\in B_{l(\bz)}$.
We then define two probability mass functions on the bins:
\begin{enumerate}
\item the target bin masses $u_l \propto \int_{B_l} \slackdens{j}(s)\,ds$, and
\item the proposal bin masses $v_l \coloneq \mathbb{P}_{\nu}\bigl[s_j(\bz)\in B_l\bigr]$.
\end{enumerate}
The induced spin-domain target measure is then
\begin{gather}
  \label{eq:bin-prob}
  \binprob{j}(\bz)\propto \nu(\bz)\,\frac{u_{l(\bz)}}{v_{l(\bz)}}, \qquad
  w_k \propto \frac{u_{l(\bz^{(k)})}}{v_{l(\bz^{(k)})}}.
\end{gather}
When the proposal distribution $\nu$ is induced by independent Bernoulli
variables $\bx_i\sim\mathrm{Bernoulli}(p_i)$, we can interpret
\eqref{eq:bin-prob} as redistributing the target bin mass $u_{l(\bz)}$
across the binary vectors whose slack values fall inside the same bin;
see \Cref{fig:slack-dist}. The remaining question is therefore how to
approximate the proposal bin masses $v_l$.

Under this independent-Bernoulli proposal, the binary-domain slack
\(s_j(\bx)=a_j^\top\bx-b_j\) has cumulant-generating function
\begin{equation}
  \label{eq:cgf}
  K_j(t)=-b_j t+\sum_{i=1}^n\log\bigl(1-p_i + p_i e^{a_{j,i} t}\bigr).
\end{equation}
For a given $s$, let \(\hat t\) solve \(K_j'(\hat t)=s\). We can then
approximate the CDF of the proposal slack distribution by the standard
Lugannani--Rice saddlepoint formula~\cite{lugannani1980},
\[
\widehat F_{\bm{s}_j}(s)\approx \Phi_{\mathrm N}(a(s))
+\phi_{\mathrm N}(a(s))\!\left(\frac{1}{a(s)}-\frac{1}{b(s)}\right),
\]
where $\Phi_{\mathrm N}$ and $\phi_{\mathrm N}$ denote the standard
normal CDF and PDF, and
\[
a(s)=\operatorname{sign}(\hat t)\sqrt{2(\hat t s - K_j(\hat t))},
\qquad
b(s)=\hat t\sqrt{K_j''(\hat t)}.
\]
We then set
\(\widehat v_l=\widehat F_{\bm{s}_j}(\ell_l^+)-\widehat F_{\bm{s}_j}(\ell_l^-)\).
Replacing \(v_l\) by \(\widehat v_l\) yields approximate weights;
because \(\widehat v_l\) may be imperfect, we use the \acrshort{snis}
estimators in \eqref{eq:est-G} and \eqref{eq:est-c}, which correct for
normalization and mitigate small approximation errors.

This construction allows us to use any target slack density
$\slackdens{j}$---including mixtures of densities---when defining the
induced spin-domain measure $\binprob{j}$.

\section{Experimental Setup}
Projecting a penalty function $\psi$ onto fewer quadratic terms necessarily yields a poorer approximation than methods that use all quadratic terms. The experiments in this section therefore ask:
\emph{does the benefit of a topology-aware penalty outweigh the downside of using fewer quadratic terms in the
  approximation?} Source files for replicating all experiments are
available at:
\url{https://github.com/lklee9/topology-aware-walsh-fourier-penalization}.

\subsection{Problem Instances}\label{sec:problems}
To evaluate and compare \acrshort{method} with existing inequality
penalization approaches, we use two families of constrained binary
optimization problems: Maximum Independent Set (MIS) and
Multidimensional Knapsack (MDKP). These problem families were chosen
because they represent distinct challenge profiles: MIS encodes pairwise
conflict constraints, while MDKP features constraints that can
potentially involve every variable. 

% For the final experiment on D-Wave's quantum annealers in
% \Cref{sec:benchmark}, we use benchmark instances that were also used in
% \cite{sharma2025}. However, to obtain smaller problem instances for our
% initial classical experiments, we use randomly generated instances of
% the two problem families designed to have properties roughly similar to
% those of the benchmark instances. Furthermore, for each problem family
% and size, we generate 20 random instances beforehand and use the same
% set of instances for all experiments that use randomly generated problem
% instances. Before using these problem instances, we convert them into
% constrained minimization problems of the form in
% \eqref{eq:opt-prob}.

For the final D-Wave hardware's quantum annealing experiments in \Cref{sec:benchmark}, we use the benchmark instances from \cite{sharma2025}. For the preceding classical experiments, we instead use smaller randomly generated instances with properties similar to the benchmark set. For each problem family and size, we generate 20 instances and use the same set throughout all experiments. Each instance is first converted into the constrained minimization form in \eqref{eq:opt-prob}.

\subsubsection{Maximum Independent Set (MIS)}
For an undirected graph \(G=(V,E)\) with \(n=|V|\), the maximum
independent set problem selects the largest subset of mutually
nonadjacent vertices:
\begin{equation}
  \label{eq:mis}
  \max_{\bx\in \{0,1\}^n} \; \sum_{i=1}^n \bx_i, 
  \qquad
  \mathrm{s.t.}\quad
  \bx_i+\bx_j \leqslant 1 \quad \forall (i,j)\in E.
\end{equation}
Here \(\bx_i=1\) indicates that vertex \(i\) is selected, and each
constraint enforces that no edge has both endpoints selected. To
generate random MIS instances, we first sample an edge density
\(\rho\sim\mathrm{Unif}[0.095,0.215]\), and then add
\(\lfloor \rho \binom{n}{2}\rfloor\) edges uniformly at random to an
undirected graph on \(n\) vertices. The density range is calibrated from
the benchmark MIS instances derived from error-correcting-code
datasets~\cite{sloane2000}.

\subsubsection{Multidimensional Knapsack Problem (MDKP)}
In the multidimensional knapsack problem, there are \(n\) items with
positive profits \(p_i>0\) and \(m\) resource constraints. Selecting
item \(i\) consumes \(w_{j,i}\geqslant 0\) units of resource \(j\), whose
capacity is \(c_j\). The problem is
\begin{equation}
  \label{eq:mdkp}
  \begin{aligned}
    \max_{\bx\in \{0,1\}^n} \; \sum_{i=1}^n p_i \bx_i,
    \:\:\mathrm{s.t.}\:\:
    \sum_{i=1}^n w_{j,i}\bx_i \leqslant c_j,
    \:\: \forall j\in[m].
  \end{aligned}
\end{equation}
Thus \(\bx_i=1\) denotes selecting item \(i\), and each constraint limits
the total consumption of one resource.

For random MDKP instances, we set \(m=n-2\). We independently sample
profits \(p_i\in\{20,21,\ldots,42000\}\) and weights
\(w_{j,i}\in\{0,1,\ldots,310\}\) uniformly. For each resource \(j\), we
sample a capacity ratio \(\rho_j\sim\mathrm{Unif}[0.4,0.8]\) and set
\(
  c_j = \left\lfloor \rho_j \sum_i w_{j,i} \right\rfloor.
\)
These parameter ranges are based on the SAC-94 MDKP benchmark suite,
which contains instances derived from real-world industrial
problems~\cite{drake2015}.

\subsection{Methods to Compare}\label{sec:methods}
Since \acrshort{method} is a slack-free penalization method for encoding
inequality constraints in QUBO problems, we will mainly compare
\acrshort{method} to a classical baseline and a state-of-the-art
slack-free penalty method. Specifically, we consider:
\begin{enumerate}[leftmargin=*, labelsep=0.5em]
\item \textbf{\acrshort{method} (full) Penalty}: Our proposed method but
  where we ignore the topology of the solver and project the origin
  penalty $\psi$ onto all quadratic terms.
\item \textbf{\acrshort{method} (topology) Penalty}: Our proposed method
  where $\psi$ is projected onto the quadratic terms implied by the
  topology of the solver.
\item \textbf{Unbalanced penalization (UP)}: A slack-free encoding of
  inequality constraints with a tunable quadratic penalty
  \cite{montanez-barrera2024},
  \begin{equation}
  \label{eq:unb-pen}
  \min_{\bx\in\{0,1\}^{n}}f(\bx) + \sum_{j=1}^{m}
  \Biggl[\frac{\hat{\lambda}_{1}}{2} \bigl(s_{j}(\bx)\bigr)^{2} -
  \hat{\lambda}_{2} \bigl(s_{j}(\bx)\bigr)\Biggr].
\end{equation}
where $s_{j}(\bx)\coloneq a_{j}^{\top}\bx - b_{j}$.
As in \cite{montanez-barrera2024}, we use the same coefficients for
every constraint, obtained via Nelder--Mead; see
\Cref{sec:qubo-build} for further details.
\item \textbf{Classical reference (IBM CPLEX v22.1.2)}: A commercial
  mixed-integer programming solver.
\end{enumerate}
We will not compare \acrshort{method} to the slack-based encoding of
inequality constraints as \cite{montanez-barrera2024} has already
compared penalty-based approaches to slack-based approaches for
encoding inequality constraints. Furthermore, we will mainly use the
classical reference method, CPLEX, to obtain the (approximate) optimal
solutions for the problem instances in \Cref{sec:problems} to help with
computing the performance metrics in \Cref{sec:metrics}.

\subsection[Implementation of TAWP Used in This Paper]{Implementation of \acrshort{method} Used in This Paper}\label{sec:method-config}
In all experiments we fix the two user inputs of \acrshort{method}
(\Cref{sec:main-res}): the origin penalty $\psi$ is the Heaviside
function, and $\mu_{\bZ}$ is induced by a Gaussian-like target
density over the slack values, centered at half of the maximum
feasible slack,
\begin{equation}
  \label{eq:gaussian-target}
  \slackdens{j}(s) \propto
  \exp\Bigl(-\tfrac{(s-c)^2}{2\sigma^2}\Bigr),
  \qquad c = \tfrac{1}{2}s_j^{\max},
\end{equation}
where \(s = a_j^\top(1+\bz)/2 - b_j\),
\(s_j^{\max}=\sum_{i:a_{j,i}>0}a_{j,i}-b_j\) is the maximum attainable
slack, and the per-constraint bandwidth $\sigma > 0$ is chosen such that
at least \(95\%\) of the density mass falls on the feasible range. From
this slack density we obtain $\mu_{\bZ}$ using the Lugannani--Rice
approximation described in \Cref{subsec:target-measure}.  The hope is
that by concentrating the error budget in the middle of the feasible
range, the measure might encourage the projected quadratic surrogate to
be a roughly symmetric bowl over the feasible region and to grow large on
the infeasible side---where approximation accuracy is irrelevant as long
as the penalty is large enough.

\begin{figure}
  \centering

  \subfloat[Original constraint: the feasible side dominates the slack
  range.]{%
    \includegraphics[width=0.9\linewidth]{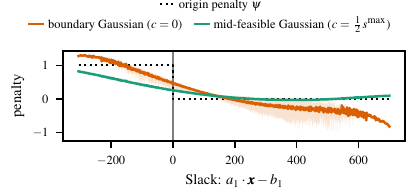}%
    \label{fig:measure-projection:a}}

  \subfloat[Modified constraint: the infeasible side dominates the slack
  range.]{%
    \includegraphics[width=0.9\linewidth]{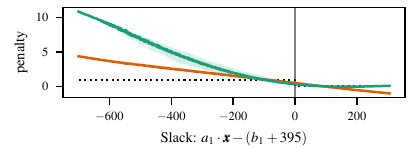}%
    \label{fig:measure-projection:b}}

  \caption{Exact projection of Heaviside penalty
    \(\psi\) onto full quadratic subspace for constraint \(j=1\) of
    \texttt{pet3}. Lines: mean projected penalty per slack value; bands:
    min--max range.}
  \label{fig:measure-projection}
\end{figure}

\Cref{fig:measure-projection} compares exact-enumeration projections of
the same Heaviside penalty under two Gaussian target measures for one
illustrative \texttt{pet3} constraint: once with the feasible side
dominating the slack range, and once with the constraint tightened so
that the infeasible side dominates. In both situations, the
boundary-centered Gaussian mis-shapes the feasible landscape and rewards
increasing amounts of slack---distorting the ranking of feasible
solutions. The mid-feasible Gaussian in \eqref{eq:gaussian-target}
instead realizes the intended penalty in both cases, i.e. it is
approximately flat over the feasible range, and gradually increasing on
violations. That said, we emphasize that the target in
\eqref{eq:gaussian-target} is one simple choice rather than a uniquely
principled one.

The proposal distribution used for \acrshort{snis} is the product
Bernoulli law with \(p_i=0.5\). The moments in
\eqref{eq:est-G} and \eqref{eq:est-c} are estimated from \(N=\min(2^{n-1},2^{15})\)
proposal samples, and the normal equations are solved with a
ridge parameter \(\epsilon=10^{-8}\). For topology-aware variants of
\acrshort{method}, the coupling set $\edges$ is induced from the
target hardware graph by a simple greedy logical-to-physical mapping
(\Cref{apx:mapping}), providing a baseline with minimal effort devoted to the
mapping and the choices of $\psi$ and $\mu_{\bZ}$.

These choices fix the \acrshort{method} configuration throughout the
paper, leaving only the outer inequality multiplier to be tuned, as
described in the following section.

\subsection{QUBO Construction}\label{sec:qubo-build}
\begin{table}
  \centering
  \caption{Tuned global penalty multipliers per problem family and
    method; $\hat{\lambda}_{1},\hat{\lambda}_{2}$ are the additional
    shape parameters of unbalanced penalization.}
  \label{tab:mults}
  \setlength{\tabcolsep}{3pt}
  \footnotesize
  \begin{tabular}{@{}lccccccc@{}}
    \toprule
    \textbf{Family} & \multicolumn{4}{c}{\textbf{\acrshort{method}}} & \multicolumn{3}{c}{\textbf{Unbalanced P.}}\\
    \cmidrule(lr){2-5}\cmidrule(lr){6-8}
    & \multicolumn{1}{c}{\textit{Full}} & \multicolumn{1}{c}{\textit{Chim.}} & \multicolumn{1}{c}{\textit{Peg.}} & \multicolumn{1}{c}{\textit{Zeph.}} & \multicolumn{1}{c}{$\lambda$} & \multicolumn{1}{c}{$\hat{\lambda}_{1}$} & \multicolumn{1}{c}{$\hat{\lambda}_{2}$}\\
    \midrule
    MDKP & 0.1306 & 0.0621 & 0.0632 & 0.0660 & 0.0998 & 0.9997 & 0.0003\\
    MIS  & 0.2167 & 0.2163 & 0.2167 & 0.2165 & 0.1920 & 0.6250 & 0.3750\\
    \bottomrule
  \end{tabular}
\end{table}

Using either unbalanced penalization or \acrshort{method}, we obtain the following QUBO matrix from each problem instance:
\begin{equation}
  \label{eq:qubo-sum}
  Q + \lambda\sum_{j=1}^{m}\frac{\sigma_{Q}}{\sigma_{j}}A_{j}.
\end{equation}
Here, $Q$ is the QUBO matrix for the objective and $A_{j}$ are
the penalty matrices for the $m$ inequality constraints, obtained
via unbalanced penalization or \acrshort{method}. Before summing the penalty matrices, we normalize each $A_{j}$ by its standard deviation $\sigma_{j}$ under a uniform
distribution over $\{0,1\}^{n}$~\cite{lee2025b}. This helps to ensure
that each constraint has roughly the same scale, particularly
helpful for MDKP-like problems. We further rescale each normalized constraint penalty by the standard deviation of the objective function, $\sigma_Q$, so that the penalties and objective are on comparable scales.

The penalty multipliers in \Cref{tab:mults} were obtained by tuning on a
fixed problem size for each problem family, specifically MDKP ($n=15$)
and MIS ($n=16$). For unbalanced
penalization, we also tuned the parameters in \eqref{eq:unb-pen} that
control the shape of its quadratic penalty. These multipliers were found
using Nelder--Mead to minimize the optimality gap defined in
\Cref{sec:metrics}.

\subsection{Metrics}\label{sec:metrics}
To evaluate the solution quality and consistency of each method in
\Cref{sec:methods}, we use the following metrics:
\begin{enumerate}
\item \textbf{Optimality Gap (Per Instance)}~\cite{
    sharma2025,lee2025a,takabayashi2025a,takabayashi2025}:
  \begin{equation}
    \label{eq:5}
    \frac{\text{Best objective found} - \text{True optimum objective}}{\text{True optimum objective}}.
  \end{equation}
% \item \textbf{Number of Physical Qubits After Embedding}~\cite{suda2026}
\item \textbf{\acrfull{cop}}:
  Used by \cite{montanez-barrera2024} for assessing their method of
  unbalanced penalization, \acrshort{cop} is basically the optimality
  rate scaled by the size of the solution space,
  \begin{equation}
    \label{eq:10}
    2^{n} \times \frac{\text{\# reads with true optimal solution}}
    {\text{\# reads}}.
  \end{equation}
  \acrshort{cop} also converts monotonically to the standard
  \acrfull{tts}~\cite{ronnow2014}. Letting
  \(\bar{p}=\text{mean }\acrshort{cop}/2^{n}\) pooled over all problem
  instances, observing the optimum with \(99\%\) probability takes
  \(R_{99}=\ln(0.01)/\ln(1-\bar{p})\) reads, and
  \(\mathrm{TTS}_{99}=t_{\mathrm{a}}R_{99}\) time for annealing time
  \(t_{\mathrm{a}}\).
% \item \textbf{Feasibility Rate}~\cite{lee2025a,suda2026}
%   \begin{equation}
%     \label{eq:fea-rate}
%     \frac{\text{\# Instances with feasible solution}}
%     {\text{\# Instances}}
%   \end{equation}
\end{enumerate}
Furthermore, when comparing different methods against a ``base''
method, we use the following $\Delta$ formulas to compute the relative
change of a metric between the methods,
\begin{gather}
  \label{eq:delta-gap}
  \Delta\mathrm{gap} = (\text{Base Gap} - \text{Method Gap})/\text{Base Gap},\\
  \label{eq:delta-cop}
  \Delta\mathrm{CoP} = (\text{Method CoP} - \text{Base CoP})/\text{Base CoP}.
\end{gather}
A positive relative change favors the method being compared against the
base method.

\section{Experiments}\label{sec:experiments}
\begin{figure*}[t]
  \centering
  
  \subfloat[Relative difference in \acrshort{cop} between
  \acrshort{methodfull} and unbalanced penalization when solved using
  \acrshort{sqa} before and after minor embedding on different
  hardware topologies.]{%
    \includegraphics[height=0.17\paperheight]{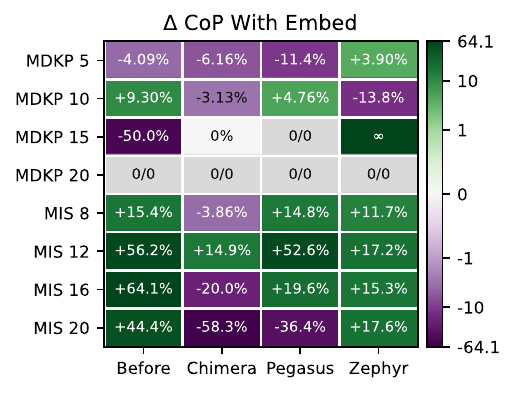}
    \label{fig:compare-up:a}}
  \hspace{1em}
  \subfloat[Relative difference in \acrshort{cop} between
  \acrshort{methodtopo} and unbalanced penalization when solved using
  \acrshort{sqa} before and after being embedded on different hardware
  topologies.]{%
    \includegraphics[height=0.17\paperheight]{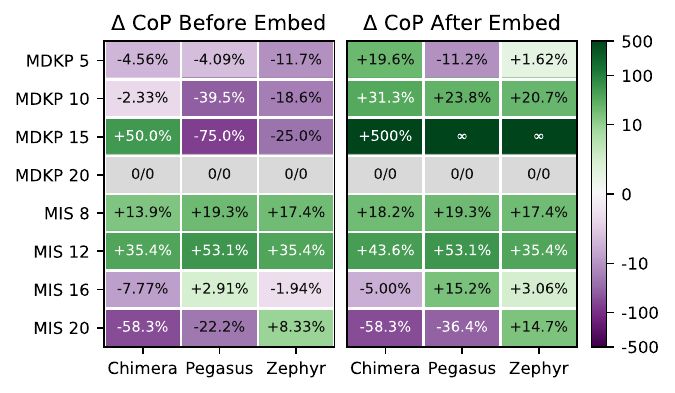}
    \label{fig:compare-up:b}}
  \caption{Mean relative \acrfull{cop} difference against the
    unbalanced-penalization base; negative values favor unbalanced
    penalization. Cells labeled $0/0$ have zero mean \acrshort{cop} for
    both methods.}
  \label{fig:compare-up}
\end{figure*}

This section addresses the following two questions about our proposed
\acrfull{method} framework:
\begin{enumerate}
\item How does the quadratic penalty formed by projecting the Heaviside
  function compare with unbalanced penalization in terms of solution
  quality?
\item When embedding a QUBO on a given topology, does the benefit
  of \acrshort{method} penalty outweigh the loss of using
  fewer quadratic terms in the approximation?
\end{enumerate}

Throughout our experiments, we use D-Wave's path-integral \acrfull{sqa}
with a default chain strength computed by
\texttt{uniform\_torque\_compensation} to sample solutions to the
constructed QUBO problems with 1000 reads per instance. This lets us
\emph{partially} simulate the behavior of a quantum annealer---including
chain breaks over different topologies---on classical hardware. The
topologies we will use are: (1) a fully connected topology serving as a
baseline with no embedding overhead, and (2) the D-Wave hardware
topologies Chimera, Pegasus, and Zephyr as described in
\Cref{sec:back-qa}. When using these hardware topologies, we use
\texttt{minorminer} to embed the problem onto the target topology before
sampling solutions with \acrshort{sqa}. However, when using
\acrshort{method} penalties that are projected onto the topology itself,
the embedding step is direct and therefore does not produce any chains
or additional qubit overhead.

For the remainder of this section, we present the main results; the full results are available in the code repository.

\subsection[Penalty Quality: TAWP vs. Unbalanced Penalization]{Penalty Quality: \acrshort{method} vs. Unbalanced Penalization}
% \begin{figure}
%   \centering
%   \includegraphics[width=\linewidth]{figures/full_vs_unbalanced_gap_boxplots_mdkp.pdf}
%   \caption{Distribution of optimality gap for MDKP problem instances comparing TAWP (projected Heaviside) against unbalanced penalization (UP). Lower gaps indicate better solution quality.}
%   \label{fig:unb-pen-strip}
% \end{figure}

One of our main empirical questions is how \acrfull{method}, used to
construct a quadratic approximation of an inequality penalty function
\(\psi\), compares with unbalanced penalization in terms of solution
quality. Following the implementation described in
\Cref{sec:method-config}, we compare unbalanced penalization against
the \acrshort{method} penalty obtained by projecting a Heaviside
penalty under a Gaussian measure centered at the midpoint of the
feasible slack region.

To address this question, \Cref{fig:compare-up} reports the relative
difference in \acrshort{cop} between the \acrshort{methodfull} and
\acrfull{up} penalties, with positive values indicating an advantage for
\acrshort{methodfull}. As shown in \Cref{fig:compare-up:a}, before
embedding, the \acrshort{methodfull} penalty outperforms \acrshort{up} on
average across all MIS problem sizes, but underperforms on the MDKP
instances. This suggests that, when projected onto the full quadratic
space, \acrshort{methodfull} is less effective for the MDKP
inequality-constraint penalties. Moreover, its performance degrades on
sparser target topologies such as Chimera, while remaining competitive
on denser topologies such as Zephyr. This behavior suggests that
\acrshort{methodfull} penalties can be sensitive to embedding-induced
errors when the target hardware topology is sparser than the logical
interaction graph induced by the penalty.

This issue can be mitigated by projecting the Heaviside penalty directly
onto the topology on which the QUBO will be embedded, rather than first
projecting onto the full set of quadratic terms. As shown in
\Cref{fig:compare-up:b}, this topology-aware projection preserves the
advantage of \acrshort{method} on the MIS instances and generally
improves its performance on the MDKP instances as well. In particular,
on the Zephyr topology, \acrshort{methodtopo} outperforms \acrshort{up}
on every problem instance except MDKP(20), where the two methods remain
close, with mean objective gaps of \(0.1199\) and \(0.1245\) for
\acrshort{methodtopo} and \acrshort{up}, respectively.

\subsection{Benefits of Topology-Aware Penalties}
\begin{figure}
  \centering
  \includegraphics[width=0.9\linewidth]{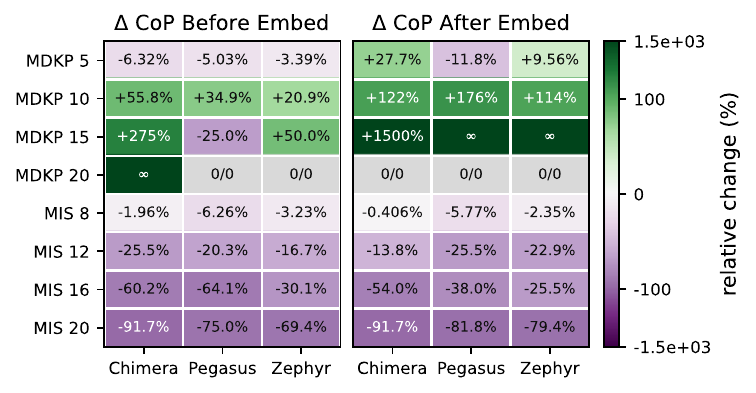}
  \caption{Relative \acrshort{cop} difference before and after
    embedding on each D-Wave topology, between \acrshort{up} and
    \acrshort{methodtopo} with \acrshort{up}'s quadratic penalty as
    the origin; positive favors \acrshort{methodtopo}.}
  \label{fig:obj-gap}
\end{figure}

\begin{table}[t]
  \centering
  \caption{Mean \acrshort{cop} before/after embedding on Zephyr for
    \acrshort{up} and its \acrshort{methodtopo} projection; higher is
    better.}
  \label{tab:compare-unb-pen-cop-mean}
  \begin{tabular}{@{}llrrrr@{}}
  \toprule
  & &  \multicolumn{2}{c}{Logical} & \multicolumn{2}{c}{Embedding} \\
  \cmidrule(lr){3-4} \cmidrule(lr){5-6}
  Family & $n$ & \acrshort{up} & \acrshort{method} & \acrshort{up} & \acrshort{method} \\
  \midrule
  MDKP & 5 & 1.37 & 1.32 & 1.19 & 1.3 \\
   % &  & Fea. & 0.27 & 0.27 & 0.22 & 0.26 \\
  MDKP & 10 & 2.2 & 2.66 & 1.48 & 3.17 \\
   % &  & Fea. & 0.18 & 0.18 & 0.05 & 0.18 \\
  MDKP & 15 & 6.55 & 9.83 & 0 & 18.02 \\
   % &  & Fea. & 0.23 & 0.26 & 0.05 & 0.26 \\
  MDKP & 20 & 0 & 0 & 0 & 0 \\
   % &  & Fea. & 0.25 & 0.32 & 0.1 & 0.32 \\
  \midrule
  MIS & 8 & 13.08 & 12.66 & 13.08 & 12.77 \\
   % &  & Fea. & 0.3 & 0.3 & 0.3 & 0.31 \\
  MIS & 12 & 39.32 & 32.77 & 39.32 & 30.31 \\
   % &  & Fea. & 0.17 & 0.18 & 0.17 & 0.17 \\
  MIS & 16 & 337.51 & 235.93 & 321.13 & 239.21 \\
   % &  & Fea. & 0.13 & 0.14 & 0.13 & 0.14 \\
  MIS & 20 & 1887.44 & 576.72 & 1782.58 & 367 \\
   % &  & Fea. & 0.14 & 0.11 & 0.13 & 0.1 \\
  \bottomrule
\end{tabular}

\end{table}

In the previous section, we compared unbalanced penalization with a \acrshort{method} penalty obtained by projecting the Heaviside function onto all quadratic terms. However, this comparison combines two potential benefits of the \acrshort{method} framework: (1) projecting the Heaviside function, and (2) restricting the projection to the quadratic terms supported by the target topology. To isolate the effect of the topology restriction alone, we replace the Heaviside function with the quadratic penalty from unbalanced penalization and project this penalty onto the target topologies.

% \Cref{fig:obj-gap} shows the mean relative change in \acrshort{cop}
% between the \acrshort{methodtopo} penalty and unbalanced
% penalization. Surprisingly, we observe the opposite pattern from the
% previous section: \acrshort{methodtopo} now generally performs better on
% MDKP instances but worse on MIS instances. This reversal implies that
% the improvements we observed for MIS in the previous section were primarily
% due to the Heaviside projection itself, rather than the topology-aware
% restriction. Conversely, the improvements on MDKP with topology-aware
% penalties, even before embedding the resulting QUBOs on a specific
% topology, can likely be attributed to the regularization effect of
% dropping low-importance quadratic terms during the projection.

\Cref{fig:obj-gap} shows the mean relative change in \acrshort{cop} between \acrshort{methodtopo} and unbalanced penalization. Surprisingly, the trend reverses from the previous section: \acrshort{methodtopo} generally performs better on MDKP but worse on MIS. This suggests that the earlier MIS improvements were primarily due to the Heaviside projection rather than the topology-aware restriction. In contrast, the MDKP improvements likely stem from the regularization effect of discarding low-importance quadratic terms during the projection, even before embedding the resulting QUBOs onto a specific topology.

% Part of the reason \acrshort{methodtopo} underperforms on the MIS
% instances when compared to \acrshort{up} is the inherent structure of
% the constraints in each problem family. In MIS, each inequality involves
% only two variables, making the constraints relatively simple but also
% making them sensitive to any approximation error. In MDKP, each
% inequality can involve many variables, making the constraints inherently
% ``harder'' and the penalty landscape more complex. We can empirically
% observe the consequences of this ``harder'' penalty landscape in the
% smaller absolute \acrshort{cop} values in
% \Cref{tab:compare-unb-pen-cop-mean}. Therefore, \acrshort{methodtopo}
% seems to perform better on ``harder'' inequality constraints while still
% performing acceptably on ``easier'' ones---it still is capable of
% finding the true feasible optimum for the MIS problems.

One reason \acrshort{methodtopo} underperforms \acrshort{up} on MIS is the structure of the constraints. MIS constraints involve only two variables and are therefore more sensitive to approximation error. In contrast, MDKP constraints couple many variables, creating a more complex penalty landscape, as reflected by the smaller absolute \acrshort{cop} values in \Cref{tab:compare-unb-pen-cop-mean}. These results suggest that \acrshort{methodtopo} is most beneficial for more dense inequality constraints while remaining competitive on sparser ones. Even for MIS, it still recovers the true feasible optimum.

% In \acrshort{tts} terms (\Cref{sec:metrics}), the embedding cost is
% stark: at MDKP \(n=15\), embedded \acrshort{up} finds no optimum
% within \(2\times10^{4}\) pooled reads, whereas native
% \acrshort{methodtopo} reaches the optimum within
% \(R_{99}\approx8\times10^{3}\) reads.

\subsection{Experiment: Benchmark on D-Wave Hardware}\label{sec:benchmark}

\begin{table*}[t]
\centering
\caption{Objective gaps (\%) for MDKP benchmarks on D-Wave Advantage
  and Advantage2. Cells: sample-level \(\mathrm{mean}[\min,\max]\)
  over \(2500\) reads (lower is better), with the feasible-sample
  count on a second line. Bold marks the best mean/min/max gap and
  the largest feasible count per instance and QPU; ``--'' means no
  feasible sample.}
\label{tab:mdkp-gaps}
\scriptsize
\setlength{\tabcolsep}{3pt}
\newcommand{\gapcell}[2]{\begin{tabular}[t]{@{}r@{}}#1\\{\scriptsize \(#2/2500\)}\end{tabular}}
\begin{tabular}{@{}llrrrrrr@{}}
  \toprule
  & & \multicolumn{3}{c}{Advantage (Pegasus)} & \multicolumn{3}{c}{Advantage2 (Zephyr)} \\
  \cmidrule(lr){3-5} \cmidrule(lr){6-8}
  Instance & \(n\) & \acrshort{up} \cite{montanez-barrera2024} & \acrshort{methodfull} & \acrshort{methodtopo} &  \acrshort{up} \cite{montanez-barrera2024} & \acrshort{methodfull} & \acrshort{methodtopo} \\
  \midrule
  \begin{tabular}[t]{@{}l@{}}pet3\end{tabular} & \begin{tabular}[t]{@{}l@{}}15\end{tabular}
  & \gapcell{\textbf{14.04} [0.25, \textbf{62.02}]}{1657}
  & \gapcell{29.48 [0.25, 80.82]}{\textbf{1695}}
  & \gapcell{20.66 [\textbf{0.00}, 72.85]}{1522}
  & \gapcell{\textbf{15.81} [\textbf{0.00}, \textbf{42.71}]}{\textbf{2033}}
  & \gapcell{25.80 [0.25, 73.60]}{1745}
  & \gapcell{17.76 [\textbf{0.00}, 56.91]}{1352} \\

  \begin{tabular}[t]{@{}l@{}}pb5\end{tabular} & \begin{tabular}[t]{@{}l@{}}20\end{tabular}
  & \gapcell{16.67 [\textbf{3.74}, 33.05]}{21}
  & \gapcell{23.66 [\textbf{3.74}, 45.77]}{\textbf{282}}
  & \gapcell{\textbf{9.35} [9.35, \textbf{9.35}]}{1}
  & \gapcell{15.66 [13.32, 18.00]}{2}
  & \gapcell{19.57 [\textbf{2.52}, 44.13]}{\textbf{216}}
  & \gapcell{\textbf{14.40} [13.18, \textbf{15.61}]}{2} \\

  \begin{tabular}[t]{@{}l@{}}pet4\end{tabular} & \begin{tabular}[t]{@{}l@{}}20\end{tabular}
  & \gapcell{24.02 [1.31, 86.11]}{\textbf{1021}}
  & \gapcell{32.48 [\textbf{0.98}, 77.45]}{709}
  & \gapcell{\textbf{23.51} [1.47, \textbf{64.79}]}{397}
  & \gapcell{21.87 [\textbf{1.96}, 63.89]}{742}
  & \gapcell{26.19 [3.27, 87.42]}{\textbf{759}}
  & \gapcell{\textbf{18.04} [2.61, \textbf{56.13}]}{293} \\

  \begin{tabular}[t]{@{}l@{}}pb1\end{tabular} & \begin{tabular}[t]{@{}l@{}}27\end{tabular}
  & \gapcell{19.77 [7.51, 49.16]}{36}
  & \gapcell{22.65 [6.76, 56.60]}{\textbf{319}}
  & \gapcell{\textbf{19.18} [\textbf{5.63}, \textbf{39.19}]}{93}
  & \gapcell{18.58 [7.86, 37.44]}{21}
  & \gapcell{23.86 [\textbf{6.18}, 66.41]}{\textbf{489}}
  & \gapcell{\textbf{15.95} [8.16, \textbf{30.39}]}{21} \\

  \begin{tabular}[t]{@{}l@{}}hp1\end{tabular} & \begin{tabular}[t]{@{}l@{}}28\end{tabular}
  & \gapcell{24.24 [5.88, 54.33]}{93}
  & \gapcell{23.25 [\textbf{3.86}, 63.90]}{\textbf{620}}
  & \gapcell{\textbf{19.35} [5.38, \textbf{49.18}]}{117}
  & \gapcell{19.89 [5.82, 35.25]}{17}
  & \gapcell{21.18 [\textbf{3.39}, 59.60]}{\textbf{344}}
  & \gapcell{\textbf{16.73} [6.82, \textbf{35.17}]}{31} \\

  \begin{tabular}[t]{@{}l@{}}pet5\end{tabular} & \begin{tabular}[t]{@{}l@{}}28\end{tabular}
  & \gapcell{24.09 [\textbf{1.94}, 65.08]}{\textbf{1520}}
  & \gapcell{34.49 [7.58, 77.26]}{1472}
  & \gapcell{\textbf{22.69} [2.10, \textbf{59.64}]}{1439}
  & \gapcell{23.31 [3.51, \textbf{62.70}]}{1972}
  & \gapcell{35.04 [5.73, 83.06]}{\textbf{2061}}
  & \gapcell{\textbf{19.72} [\textbf{3.43}, 65.08]}{1521} \\

  \begin{tabular}[t]{@{}l@{}}pb2\end{tabular} & \begin{tabular}[t]{@{}l@{}}34\end{tabular}
  & --
  & \gapcell{24.89 [\textbf{10.70}, 51.60]}{\textbf{138}}
  & \gapcell{\textbf{15.29} [15.29, \textbf{15.29}]}{1}
  & --
  & \gapcell{\textbf{21.06} [\textbf{7.69}, \textbf{37.70}]}{\textbf{27}}
  & -- \\

  \begin{tabular}[t]{@{}l@{}}pet6\end{tabular} & \begin{tabular}[t]{@{}l@{}}39\end{tabular}
  & \gapcell{24.72 [\textbf{4.99}, 70.60]}{484}
  & \gapcell{27.25 [5.28, 78.81]}{\textbf{1007}}
  & \gapcell{\textbf{20.86} [5.97, \textbf{59.50}]}{469}
  & \gapcell{\textbf{16.31} [2.86, \textbf{45.53}]}{165}
  & \gapcell{28.28 [\textbf{2.19}, 79.63]}{\textbf{814}}
  & \gapcell{17.82 [4.42, 52.93]}{432} \\

  \begin{tabular}[t]{@{}l@{}}pet7\end{tabular} & \begin{tabular}[t]{@{}l@{}}50\end{tabular}
  & \gapcell{27.23 [6.95, 67.08]}{342}
  & \gapcell{31.01 [6.57, 69.83]}{\textbf{896}}
  & \gapcell{\textbf{20.06} [\textbf{5.49}, \textbf{53.64}]}{305}
  & \gapcell{22.12 [\textbf{3.94}, 62.79]}{358}
  & \gapcell{28.36 [4.34, 64.62]}{\textbf{817}}
  & \gapcell{\textbf{15.94} [5.91, \textbf{48.02}]}{188} \\
  \bottomrule
\end{tabular}
\end{table*}

% To investigate this
% phenomenon further, we next evaluated the three penalization methods
% from \Cref{sec:methods}---\acrshort{up}, \acrshort{methodfull}, and
% \acrshort{methodtopo}---on the benchmark MDKP instances described in
% \Cref{sec:problems}. For the Pegasus hardware family we used D-Wave's
% Advantage system, and for the Zephyr hardware family we used D-Wave's
% Advantage2 system. In both cases, each QUBO was submitted with an
% annealing time of \(25\,\mu\mathrm{s}\) and \(2500\) reads.

% The sample-level objective-gap statistics are reported in
% \Cref{tab:mdkp-gaps}. A clear pattern emerges: \acrshort{methodtopo}
% achieves the lowest mean objective gap for almost every available
% instance--hardware pair. The main exceptions are instance pet3 on both
% systems; and instances pet6/pb2 on Advantage2. Moreover,
% \acrshort{methodtopo} also achieves the smallest \emph{maximum}
% objective gap for most instances, which indicates that it is typically
% the ``least risky'' method in the sense of avoiding very poor returned
% samples.

% At the same time, we can see that \acrshort{methodtopo} is usually not
% the method that finds the single best sample nor the one that returns
% the most feasible samples. In many cases, \acrshort{methodfull} is the
% one that dominates both those metrics, even when it has a worse mean
% and worst-case behavior overall. Thus, even for metrics that
% \acrshort{methodtopo} does not perform well in, \acrshort{methodfull}
% still dominates \acrshort{up} in those metrics most of the time.

To investigate this phenomenon further, we next evaluated the three
penalization methods from
\Cref{sec:methods}---\acrshort{up}, \acrshort{methodfull}, and
\acrshort{methodtopo}---on the benchmark MDKP instances described in
\Cref{sec:problems}. For the Pegasus hardware family we used
D-Wave's Advantage system, and for the Zephyr hardware family we
used D-Wave's Advantage2 system. In both cases, each QUBO was
submitted with an annealing time of \(25\,\mu\mathrm{s}\) and
\(2500\) reads.

The sample-level objective-gap statistics are reported in
\Cref{tab:mdkp-gaps}. A clear pattern emerges: \acrshort{methodtopo}
attains the lowest mean objective gap on \(8\) of \(9\) instances on
Advantage and \(6\) of \(8\) comparable instances on Advantage2, and
also improves the worst observed gap on most instances, indicating
that it is typically the ``least risky'' method in the sense of
avoiding very poor returned samples. 

The feasibility counts show the complementary side of this trade-off.
\acrshort{methodtopo} is usually neither the method that finds the
single best sample nor the one that returns the most feasible samples;
\acrshort{methodfull} often dominates these metrics, even when its mean
and worst observed gaps are worse. This pattern is consistent with the
projection structure. Since \acrshort{methodfull} projects onto the full
pairwise space, it is the closest quadratic surrogate to \(\psi\) in the
chosen weighted least-squares sense, which can improve feasible-sample
yield. In contrast, \acrshort{methodtopo} restricts the admissible
quadratic terms to the hardware edge set \(\edges\), reducing the
approximation space and increasing the residual
\(\|\psi-\hat\psi_{\edges}\|_\mu^2\). This can shift the energy ordering
of near-boundary configurations and reduce the probability of sampling
feasible points. However, \acrshort{methodtopo} is implemented directly
on native couplers, avoiding chain-break postprocessing and the
dynamic-range trade-offs from chain-strength tuning that can degrade
\acrshort{methodfull} after embedding. Thus, on the MDKP workloads,
\acrshort{methodtopo} trades feasible-sample yield for improved objective
quality among returned feasible samples, so we report feasibility rate
and feasible-solution quality as complementary metrics.

\section{Conclusions}

We presented \acrshort{method}, a topology-aware Walsh--Fourier penalization method for constructing slack-free, hardware-aware QUBO penalties for constrained binary optimization. Expressing a user-chosen penalty in the Walsh--Fourier basis and projecting it onto the linear and quadratic terms supported by a target topology yields a uniquely defined weighted least-squares surrogate, removing both auxiliary slack variables and per-constraint penalty tuning. Although evaluated here on inequality constraints, the construction applies to arbitrary pseudo-Boolean penalties, including equality constraints and, in principle, objectives themselves.

Empirically, topology awareness pays off once embedding is taken into account. Full-quadratic projections are already competitive with unbalanced penalization, while topology-aware projections often deliver better overall solution quality, with the clearest gains on denser constraint families such as MDKP. The hardware results on D-Wave Advantage and Advantage2 reinforce this trend: \acrshort{methodtopo} attains the lowest mean objective gap on most benchmark instances and usually the smallest worst-case gap, indicating reliable overall performance, whereas \acrshort{methodfull} more often produces the best sample and the largest number of feasible solutions.

\acrshort{method} eliminates per-constraint penalty tuning but not design freedom: the choices of \(\psi\) and \(\mu_{\bZ}\), the greedy placement, and the per-family multiplier remain heuristic (\Cref{sec:method-config,sec:qubo-build}). The optimality guarantee of \Cref{subsec:exact-projection} is conditional on them; the sampled implementation of \Cref{alg:wls} introduces Monte Carlo and regularization error that we do not bound; and a small weighted least-squares error does not by itself certify correct feasible/infeasible ordering. Feasibility preservation and solution quality are therefore validated only empirically. This motivates \textit{several directions for future work}: conditions on \((\psi,\mu_{\bZ},\edges)\) that provably separate feasible from infeasible assignments with a prescribed margin, together with a characterization of how such guarantees degrade as \(\edges\) becomes sparser; adaptive or iterative projection schemes; and intermediate feature sets that trade approximation fidelity against hardware efficiency.

In summary, our findings suggest that a QUBO reformulation for near-term quantum annealers should be judged not only by its fit to an ideal penalty in a fully connected space, but by how well it matches the connectivity of the target device. The Walsh--Fourier projection framework makes this trade-off explicit and provides a principled rigorous route toward more deployable constraint encodings.

\appendices
\crefalias{section}{appendix}

\section{\texorpdfstring{Complexity of \acrshort{method}}{Complexity of TAWP}}\label{apx:proofs}

\begin{proposition}[Complexity of sampled Walsh-coordinate
  WLS]\label{prop:complexity}
  Let $\edges$ be the set of admissible logical couplings, let
  $\Phi\coloneq \{\emptyset\}\cup \bigl\{\{i\}:i\in[n]\bigr\}\cup \edges$
  denote the admissible Walsh features, and let $K=|\Phi|$. For the
  implementation used in this paper, which materializes the design
  matrix $\mathbf{X}_{\Phi}\in\mathbb{R}^{N\times K}$ whose $k$th row is
  $\bm{g}_{\Phi}(\bz^{(k)})^{\top}$, the sampled weighted
  least-squares projection of $\psi$ can be computed in
  $\mathcal{O}(NK^2 + K^3)$ time and $\mathcal{O}(NK + K^2)$ memory.
\end{proposition}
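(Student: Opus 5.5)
The plan is to walk through \Cref{alg:wls} stage by stage, bound the time and memory of each stage, and add the bounds. Throughout we treat as $\mathcal{O}(1)$ three costs: evaluating $\psi(\bz^{(k)})$ (it depends on $\bz$ only through the slack, computable once per sample), drawing a sample from $\nu$, and evaluating the weight $w_k$. Strictly, these per-sample costs are $\mathcal{O}(n)$. Since $K\geqslant n+1$, even that is absorbed into $\mathcal{O}(K)$ per sample and so into $\mathcal{O}(NK)$ overall. The weight $w_k$ needs only the bin index $l(\bz^{(k)})$ and a lookup of $u_l/\widehat v_l$. We take the bin-mass precomputation to be independent of $N$ and $K$, or at most subsumed by the stated bounds.

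\textbf{Building the design matrix.} For each sample we compute the row $\bm{g}_{\Phi}(\bz^{(k)})$. Its entries are $\chi_\emptyset=1$, $\chi_{\{i\}}=z_i$, and $\chi_{\{i,j\}}=z_iz_j$ for each edge in $\edges$. Every $S\in\Phi$ has $|S|\leqslant 2$, so each entry costs $\mathcal{O}(1)$ and each row costs $\mathcal{O}(K)$. Materializing $\mathbf{X}_\Phi$ therefore costs $\mathcal{O}(NK)$ time and $\mathcal{O}(NK)$ memory. The weight vector $\bm w$ and the vector of penalty values $(\psi(\bz^{(k)}))_k$ add only $\mathcal{O}(N)$ time and memory.

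\textbf{Forming the estimators.} With $W=\mathrm{diag}(\bm w)$, the estimator \eqref{eq:est-G} is $\hat G=\mathbf{X}_\Phi^\top W\mathbf{X}_\Phi/\sum_k w_k$. We compute it by scaling the rows of $\mathbf{X}_\Phi$ in $\mathcal{O}(NK)$ and then forming the $K\times K$ product. That product has $K^2$ entries, each an inner product of length $N$, for $\mathcal{O}(NK^2)$ time and $\mathcal{O}(K^2)$ memory. Likewise \eqref{eq:est-c} is $\bm{\hat c}=\mathbf{X}_\Phi^\top(\bm w\odot\bm\psi)/\sum_k w_k$, which costs $\mathcal{O}(NK)$ time and $\mathcal{O}(K)$ memory.

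\textbf{Solving the system.} Adding $\epsilon I$ costs $\mathcal{O}(K)$. The regularized system $(\hat G+\epsilon I)\bm{\hat\theta}=\bm{\hat c}$ is symmetric positive semidefinite plus $\epsilon I$, so a Cholesky or LU factorization solves it in $\mathcal{O}(K^3)$ time and $\mathcal{O}(K^2)$ memory. Reading the $K$ coefficients of $\bm{\hat\theta}$ back into QUBO entries via \eqref{eq:pen-approx} costs $\mathcal{O}(K)$.

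\textbf{Conclusion and main subtlety.} Summing the stages gives $\mathcal{O}(NK+NK^2+K^3)=\mathcal{O}(NK^2+K^3)$ time and $\mathcal{O}(NK+K^2)$ memory. The only point that needs care is the cost accounting for $\psi$, the sampling, and the weights. That is why the statement is phrased for the implementation that materializes $\mathbf{X}_\Phi$, and why we bound these per-sample costs by $\mathcal{O}(n)\subseteq\mathcal{O}(K)$ as argued above. If one instead worried that the saddlepoint preprocessing depends on the number of bins, we would state it as an additive term independent of $N$ and $K$. The remaining steps are standard dense linear-algebra counts.
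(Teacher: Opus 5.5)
Your proposal is correct and follows the paper's proof essentially step for step. Like the paper, you materialize $\mathbf{X}_{\Phi}$ in $\mathcal{O}(NK)$, form $\hat G=\mathbf{X}_{\Phi}^{\top}W\mathbf{X}_{\Phi}$ in $\mathcal{O}(NK^2)$ time and $\mathcal{O}(K^2)$ memory, form $\bm{\hat c}$ in $\mathcal{O}(NK)$, and solve the $K\times K$ regularized system in $\mathcal{O}(K^3)$; your extra accounting for the per-sample costs of sampling, evaluating $\psi$, and computing $w_k$ (bounded by $\mathcal{O}(n)\subseteq\mathcal{O}(K)$ since $K\geqslant n+1$) is a harmless refinement that the paper leaves implicit.
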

\begin{proof}
  Building and storing $\mathbf{X}_{\Phi}$ costs $\mathcal{O}(NK)$ time and
  memory. Forming the Gram matrix
  $\hat G=\mathbf{X}_{\Phi}^{\top}W\mathbf{X}_{\Phi}$ costs
  $\mathcal{O}(NK^{2})$ time and $\mathcal{O}(K^{2})$ storage, where $W$ is the diagonal
  matrix of normalized sample weights. The correlation vector
  $\bm{\hat c}=\mathbf{X}_{\Phi}^{\top}W\bm{p}$, with
  $p_k=\psi(\bz^{(k)})$, costs only $\mathcal{O}(NK)$ additional time and is
  therefore dominated by the Gram-matrix step. Solving the regularized
  normal equations
  $(\hat G+\epsilon I)\bm{\hat{\theta}}=\bm{\hat c}$ for $K$ unknowns has
  worst-case cost $\mathcal{O}(K^3)$. Hence total complexity is
  $\mathcal{O}(NK^2 + K^3)$ time and $\mathcal{O}(NK + K^2)$ memory.
\end{proof}

\section{Logical-to-Physical Mapping}
\label{apx:mapping}
For topology-aware projection we construct an injective placement
$\pi:V_{\mathrm L}\hookrightarrow V_{\mathrm H}$ with a greedy
multi-start heuristic. We first build a weighted logical graph whose
edge weights accumulate $|A_{ti}|\,|A_{tj}|$ over all constraint rows
$t$ whose support contains $\{i,j\}$, so that variables that
repeatedly co-occur with large coefficients are drawn onto adjacent
hardware vertices. Starting from several high-scoring seed pairs, the
placement is extended greedily---always assigning the unplaced
variable with the largest total weight to already placed variables
onto the available hardware vertex that best preserves direct
adjacency weight---and then refined by a short pairwise-swap local
search that maximizes the total preserved coupling weight. The best
placement across restarts induces the logical coupling set
$\edges=\{\{i,j\}\subseteq V_{\mathrm L}:
(\pi(i),\pi(j))\in\edges_{\mathrm H}\}$ used in the main text, so
the projected QUBO retains only quadratic terms realizable on the
chosen placement and requires no chains. The exact scoring functions,
tie-breaking rules, and parameter values are documented in the
accompanying implementation
(\texttt{fourier\_projection/greedy\_mapping.py} in the source
repository referenced in \Cref{sec:problems}).

\section*{Acknowledgements}
This research was funded by the Federal Ministry of Research, Technology and Space of Germany and the state of North Rhine-Westphalia as part of the Lamarr Institute for Machine Learning and Artificial Intelligence. The authors also acknowledge support from the United States Department of Energy's LANL Laboratory Directed Research and Development (LDRD) program through Project 20240032DR, ``Accelerating Scientific Discovery with Quantum Annealing." The authors thank Max Bannach for assistance with the D-Wave quantum annealing experiments.
% This research has been funded by the Federal Ministry of Research,
% Technology and Space of Germany and the state of North Rhine-Westphalia
% as part of the Lamarr Institute for Machine Learning and Artificial
% Intelligence. The authors also gratefully acknowledge support from the
% U.S. Department of Energy's LANL Laboratory Directed Research and
% Development (LDRD) program through Project 20240032DR, ``Accelerating
% Scientific Discovery with Quantum Annealing''. The authors thank Max
% Bannach for support in running the experiments on a D-Wave quantum
% annealing system.

% \begin{algorithm}[t]
% \caption{Greedy logical-to-physical mapping}
% \label{alg:greedy-mapping}
% \KwIn{Constraint matrix $M$; hardware graph $G_H=(V_H,E_H)$}
% \KwOut{Injective placement $\phi$ and induced topology $E_{\mathrm{proj}}$}
% Build $G_L$ with weights $w_{ij}$ from constraint co-occurrence in $M$\;
% Keep the largest connected component of $G_H$\;
% Score logical vertices by $s_i$ and hardware vertices by $c_q$\;
% \For{each seed pair $(i_0,q_0)$ among the top-ranked candidates}{
%   initialize $\phi(i_0)=q_0$\;
%   greedily map remaining variables from the hardware frontier to
%   maximize preserved weighted adjacency\;
%   apply a small improving pairwise-swap local search\;
%   store the best-scoring placement under $F(\phi)$\;
% }
% Return the best placement and
% $E_{\mathrm{proj}}=\{(i,j):(\phi(i),\phi(j))\in E_H\}$\;
% \end{algorithm}

\IEEEtriggeratref{20}
\bibliographystyle{IEEEtran}
\bibliography{lib, references}

\end{document}